\newtheorem{theorem}{Theorem}
\newtheorem{problem}{Problem}
\newtheorem{assumption}{Assumption}
\newtheorem{proposition}{Proposition}
\newtheorem{definition}{Definition}
\newtheorem{remark}{Remark}
\newcommand{\px}{\mathbf{x}}
\newcommand{\z}{\mathbf{z}}
\newcommand{\pu}{\mathbf{u}}
\newcommand{\pv}{\mathbf{v}}
\newcommand{\prop}{\mathrm{p}}
\newcommand{\str}{\pi}
\newcommand{\A}{\mathcal{A}}
\newcommand{\U}{\mathcal{U}}
\newcommand{\X}{\mathcal{X}}
\newcommand{\I}{\mathcal{I}}
\newcommand{\reals}{\mathbb{R}}
\newcommand{\nats}{\mathbb{N}}
\DeclareMathOperator*{\argmin}{\arg \min}
\title{Learning-Based Shielding for Safe Autonomy under Unknown Dynamics
}
\author{Robert Reed and Morteza Lahijanian%
\thanks{This work was supported by Air Force Research Lab (AFRL) under agreement number FA9453-22-2-0050.}%
\thanks{Authors are with the Department of Aerospace Engineering Sciences at University of Colorado Boulder, Boulder, Colorado, USA. {\tt\small \{Robert.Reed-1, Morteza.Lahijanian\}@colorado.edu }}
}
\date{August 2024}
\begin{document}

\maketitle
\begin{abstract}
    \emph{Shielding} is a common method used to guarantee the safety of a system under a black-box controller, such as a neural network controller from deep reinforcement learning (DRL), with simpler, verified controllers. Existing shielding methods rely on formal verification through Markov Decision Processes (MDPs), assuming either known or finite-state models, which limits their applicability to DRL settings with unknown, continuous-state systems. This paper addresses these limitations by proposing a data-driven shielding methodology that guarantees safety for unknown systems under black-box controllers. The approach leverages Deep Kernel Learning to model the systems' one-step evolution with uncertainty quantification and constructs a finite-state abstraction as an Interval MDP (IMDP). By focusing on safety properties expressed in safe linear temporal logic (safe LTL), we develop an algorithm that computes the maximally permissive set of safe policies on the IMDP, ensuring avoidance of unsafe states. The algorithms soundness and computational complexity are demonstrated through theoretical proofs and experiments on nonlinear systems, including a high-dimensional autonomous spacecraft scenario. 
\end{abstract}

\section{Introduction}
    \label{intro}
    A \emph{shield} or \emph{Simplex architecture} is a control framework used in \emph{safety-critical} systems where a high-performance primary controller is ``shielded" or backed up by a simpler, verified controller to ensure safety \cite{bloem2015shield, alshiekh2018safe}. One popular application of this architecture is in neural network (NN) controllers derived from deep reinforcement learning (DRL), where the goal is to balance the performance of DRL with safety by intervening when DRL actions may lead to unsafe outcomes \cite{harris2020spacecraft, nazmy2022shielded, Reed2024Shield}. Shielding is often implemented through formal verification, modeling the evolution of safety-relevant states as a \emph{Markov Decision Process} (MDP), with model-checking techniques used to identify safe policies. In many works, the safety MDP is either assumed to be known or simplified to a finite set of states \cite{alshiekh2018safe, bloem2015shield, yang2023safe, Reed2024Shield, jansen2020safe}. However, these assumptions are restrictive in DRL settings: (i) many real-world scenarios involve continuous state spaces where finite-state MDPs cannot accurately capture system dynamics, and (ii) DRL is typically applied to unknown systems, making it unrealistic to assume prior knowledge of the safety MDP. Hence there is a lack of literature on guaranteeing the safety of unknown, continuous state systems under DRL policies.
This paper aims to bridge the gap by developing a data-driven shielding methodology. 

In this paper, we propose an algorithmic approach to generate shields from data, ensuring the safety of unknown systems operating under black-box controllers. The key insight of our method is that the data used to train DRL controllers can also be leveraged to construct a finite abstraction of the system with safety guarantees. Specifically, we employ \emph{deep kernel learning} \cite{reed2023promises, Ober2021}
to model the one-step evolution of the system with rigorously-quantified uncertainty in the learning process. Using this model, we then generate a finite-state abstraction of the system as an interval MDP (IMDP), a formalism that enables the capture of state space discretization error as well as learning uncertainty.

We focus on safety properties expressed in \emph{safe linear temporal logic} (safe LTL) \cite{kupferman2001model}, an expressive formal language whose formulas are evaluated over infinite trajectories. An important property of safe LTL specifications is that finite trajectories are sufficient to violate them. Based on this, we develop an algorithm that computes the maximal set of safe policies on the IMDP abstraction. This is achieved by converting the safety problem into a reachability problem by composing the IMDP with the complement of the safe LTL specification. Unsafe actions are then systematically removed from the composed model, resulting in a maximally permissive set of controllers that guarantee avoidance of unsafe states.
We prove that our algorithm is both sound and empirically demonstrates computational efficiency. Experimental validation on various nonlinear systems under different safety specifications, including a 6-dimensional realistic spacecraft, demonstrates the scalability of the method and the ability of the generated shields to support complex behaviors, confirming the theoretical claims.

The \emph{main} contribution of this paper is the data-driven shield construction for unknown, continuous state systems under safe LTL specifications with probabilistic guarantees. Other contributions include: (i) an algorithm to find the maximally permissive set of safe policies on an IMDP model, (ii) proofs that the set of policies returned by the algorithm guarantee the safety of the \emph{unknown, continuous state} system, and (iii) demonstrations of the efficacy of our approach in several case studies, including a realistic, complex spacecraft system using a flight-approved physics simulator.

\subsection{Related Works}

Prior works for shielding can be generally grouped based on the model assumptions used to generate safety guarantees: (a) a known model of the safety states, (b) an unknown model that is represented with a learned finite state MDP.

Work \cite{alshiekh2018safe} considers shielding for a known finite state MDP with deterministic guarantees.
Despite using an MDP, a probabilistic model, the shield is generated through a game-based approach that removes transition probabilities resulting in a shield where actions are only allowed if there is zero probability under all paths of reaching the unsafe states. 
This built off of Work \cite{bloem2015shield}, which considers a system modelled as a known finite state reactive systems, i.e. no probability is included in the transitions. Both of these works consider safety specifications which are defined by a set of allowable traces, which is a generalization of the safe LTL specifications used in our approach. However both of these methods are too restrictive as they do not allow for probabilistic violations of safety and may result in shields where no actions are considered safe. 

Work \cite{yang2023safe} developed a method of shielding with probabilistic constraints for continuous state systems, with the probabilities being based on a known continuous state MDP model of the safety states. While this model enables the guarantees to hold for the full continuous state system, the model choice makes it intractable to consider guarantees beyond a single time step. This also restricts the guarantees to one step reachability and prohibits the use of more complex safety specifications that can be defined with LTL.

When the system model is unknown, prior works have generated empirical MDP representations from data.  Work \cite{jansen2020safe} considers shielding to guarantee that a continuous state system can avoid collision 
with adversarial agents with unknowns dynamics. However, Work \cite{jansen2020safe}
generates a finite state MDP abstraction from observations of how the adversaries act, which limits the guarantees to the MDP abstraction rather than the true system. 
Work \cite{jansen2020safe} also limits safety properties to finite horizons, specifically restricting the method to provide only one step guarantees on safety.
Work \cite{Reed2024Shield} similarly generates an empirical MDP abstraction of a continuous state system. Work \cite{Reed2024Shield} defines three different shielding methods to provide guarantees for arbitrary safe LTL specifications. However, the formulation for both Work \cite{jansen2020safe} and \cite{Reed2024Shield} results in safety guarantees that only hold for the empirical MDP and not necessarily the true system.

An alternate approach to identifying safe policies is through the use of Control Barrier Functions \cite{mazouz2024data, wajid2022formal, jagtap2020formal, santoyo2021barrier}. While these approaches are shown to work with unknown systems, many of them only identify a single controller that can guarantee that the system remains in an invariant set for a finite time horizon. As typical in shield construction, work \cite{mazouz2024data} identifies the maximally permissive set of safe controllers. Yet, while \cite{mazouz2024data} produces similar results to this work, they restrict the notion of safety to identifying an invariant set for the system and require an initial set from which the system is deployed.

To the best of our knowledge, no prior work generates a shield that can provide guarantees on the safety of an unknown continuous state system for arbitrary safety specifications. Earlier works either provide guarantees for a known system - with only finite state systems allowing an infinite horizon safety specification - or provide guarantees on an empirical finite state MDP abstraction of an unknown system. Hence, this is the first work that enables infinite horizon guarantees on safety for unknown, continuous state systems.

\section{Problem Formulation}
    \label{problem}
    
\subsection{System Model}

Consider a discrete-time stochastic system:
\begin{align}
    \px(k+1) = f(\px(k), \pu(k)) + \pv(k), \label{true_dynamics}
\end{align}
where $\px(k) \in \reals^n$ is the state, $\pu(k) \in U$ is the control, $U = \{a_1, \ldots, a_{|U|}\}$ is a finite set of actions or control laws, $\pv(k) \in 
V = \{v \in \reals^n \mid \|v\|_{\infty} \leq \sigma_v \in \mathbb{R}_{> 0} \}$
is an additive 
random variable
with bounded support and probability density function $p(\pv)$, 
and $f: \reals^n \times U \rightarrow \reals^n$ is an \emph{unknown} function (vector field). Intuitively, System \eqref{true_dynamics} represents a switched system with bounded process noise and stochastic dynamics, a typical model used for reinforcement learning. We assume a dataset $D = \{({x}_i, {u}_i, {x}_i^+)\}_{i=1}^{m}$ of i.i.d. samples of System~\eqref{true_dynamics} can be gathered, where ${x}_i^+$ is a realization of the one time-step evolution of System~\eqref{true_dynamics} from $x_i$ under action $u_i$.

We impose a standard smoothness (well-behaved) assumption \cite{Chowdhury, abbasi2013online} on $f$ since it is unknown and must be learned from data. 
\begin{assumption} [RKHS Continuity] \label{assumption:1}
For a compact set $C \subset \reals^n$, let $\kappa: \reals^n \times \reals^n \rightarrow \reals_{\geq 0}$ be a given continuously differentiable kernel and $\mathcal{H}_\kappa(C)$ the reproducing kernel Hilbert space (RKHS) corresponding to $\kappa$ over $C$ with induced norm $\|\cdot\|_\kappa$. Then for each $a \in U$ and $i \in [n]$, $f^{(i)}(\cdot, a) \in \mathcal{H}_\kappa(C)$, where $f^{(i)}$ denotes the $i$-th component of $f$. Moreover, there exists a constant $B_i \geq 0$ s.t. $\|f^{(i)}(\cdot, a)\|_\kappa \leq B_i$.
\end{assumption}
Assumption \ref{assumption:1} implies that each $f^{(i)}(x, a)$ can be represented as a weighted sum of $\kappa(x, \cdot)$. Intuitively, we can choose $\kappa$ such that $\mathcal{H}_\kappa$ is dense in the space of continuous functions, e.g., with the squared exponential kernel function \cite{steinwart2001influence}. This assumption allows us to use Gaussian Process (GP) regression to estimate $f$ and bound the error on the estimates, even with non-Gaussian, bounded noise.

In reinforcement learning, System \eqref{true_dynamics} is described as a continuous-state Markov Decision Process (MDP).
\begin{definition} [MDP] \label{def:mdp}
    A \emph{Markov Decision Process} (MDP) is a tuple $M = (X, X_0, U, T, AP, L)$, where $X \subset \reals^n$ is a compact set of states, $U$ is a set of actions, $X_0 \subseteq X$ is a set of initial states, $T: X \times U \times \mathcal{B}(X) \rightarrow [0, 1]$ is a transition kernel, where $\mathcal{B}$ is a Borel set\footnote{The Borel set defines an open set of states in $X$, hence transition probabilities can be assigned.}, $AP$ is a set of atomic propositions that are related to the agents task or safety, and $L: X \rightarrow 2^{AP}$ is a labeling function that assigns a state $x \in X$ to a subset of $AP$.
\end{definition}
Let $W \subset \reals^n$ be a Borel measurable set. Then, the transition kernel $T$, which defines the probability that $\px(k+1) \in W$ given $\px(k) = x \in X$ and action $a\in U$, is
\begin{align}
    T(x, a, W) = \int_W (f(x, a) +  \pv) p(\pv) d\pv.
\end{align}

An infinite trajectory of this system is written as $\omega_\px = x_0 \xrightarrow{u_0} x_1 \xrightarrow{u_1} ...$ where each $u_i \in U$ and $x_i \in X$ and the set of all finite and infinite trajectories are denoted as $\Omega_\px^{fin}$ and $\Omega_\px$. A \emph{policy} $\str: \Omega_\px^{fin} \rightarrow U$ maps a finite trajectory $\omega_\px \in \Omega_\px^{fin}$ onto an action in $U$. 
A policy is called \textit{stationary} or \textit{memoryless} if it depends only on the last state in the finite path.
Under a policy $\str$, the paths of $M$ have well defined probability measures \cite{lahijanian2011control}, and the MDP $M$ under $\str$ is denoted as $M^\str$ with the sets of finite and infinite trajectories denoted as $\Omega_\px^{fin,\str}$ and $\Omega_\px^\str$.

Typically, we are interested in the temporal properties of $\px$ in the compact set $X \subset \reals^n$ with respect to a set of regions $R = \{r_1, \ldots, r_l\}$, where $r_i \subseteq X$, e.g., we would like the system to reach several locations of interest in a particular order and avoid some keep-out zones. The set of atomic proposition $AP = \{\prop_1, \ldots, \prop_l\}$ is then defined according to these regions, where $\prop_i$ is true iff $\px \in r_i$. 
Using the labeling function $L$,
then, the \textit{observation trace} of trajectory $\omega_\px$ is $\rho = \rho_0 \rho_1 \ldots$, where $\rho_i = L(\omega_\px \! (i))$ for all $i \in \nats$, is the sequence of observed atomic proposition sets.

To formally specify the safety specification, we use \emph{syntactically safe} LTL \cite{kupferman2001model}, which is a language that can express safety requirements that a system must not violate with a set of Boolean connectives and temporal operators.
\begin{definition}[Safe LTL]
    \label{def:sLTL}
    Given a set of atomic propositions $AP$, a safe LTL formula is inductively defined as
    \begin{equation*}
        \varphi = \prop \mid \neg \prop \mid \varphi \land \varphi \mid \varphi \lor \varphi \mid \bigcirc \varphi \mid \square \varphi 
    \end{equation*}
    where $\prop \in AP$,  $\neg$ (``not''), $\land$ (``and''), $\lor$ (``or'') are Boolean connectives, and $\bigcirc$ (``next'') and $\square$ (``globally'') are temporal operators.
\end{definition}
From $\bigcirc$ and Boolean connectives, we can construct Bounded Until ($\U^{\leq k}$) and Bounded Globally ($\square^{\leq k}$) on $\prop$ and $\neg \prop$ \cite{baier2008principles}. As an example with $AP = \{\prop_1, \prop_2\}$, a specification ``Every time $\prop_1$ is visited, it cannot be visited again until either $\prop_2$ has been seen or 10 time steps have passed, whichever occurs first" can be written in safe LTL as $\varphi = \square(\prop_1 \rightarrow (\neg \prop_1 \U^{\leq 10} \prop_2) \lor (\square^{\leq 10} \neg \prop_1))$.
The semantics of safe LTL are defined over infinite traces \cite{kupferman2001model}. An infinite trajectory $\omega_\px \in \Omega_\px$ satisfies a safe LTL formula $\varphi$, written as $\omega_\px \models \varphi$, if its observation trace satisfies $\varphi$. For MDP $M$, trajectories have an associated probability measure hence the satisfaction of $\varphi$ is probabilistic. The probability of $M$ under policy $\str$ satisfying $\varphi$ is then defined as
\begin{align}
    \mathbb{P}(M^{\str} \models \varphi) = \mathbb{P}(\omega_\px \in \Omega_\px^\str \mid \omega_\px \models \varphi).
\end{align}

Given unknown System~\eqref{true_dynamics} and safety formula $\varphi$, our goal is to generate a maximal set of policies, called a shield, that guarantee no violation to $\varphi$.

\subsection{Problem Statement}
Our problem can then be stated as follows.
\begin{problem} [Continuous State Shielding] \label{prblm:1} 
    Given a dataset $D = \{({x}_i, {u}_i, {x}_i^+)\}_{i=1}^{m}$ of i.i.d. samples of a continuous state System~\eqref{true_dynamics} 
    and safe LTL formula $\varphi$,
    identify a maximal set of policies $\Pi_S$
    such that the probability of the system under policy $\str \in \Pi_S$ violating $\varphi$ is less than $p$, i.e., 
    \begin{align*}
        \qquad \qquad \qquad & \mathbb{P}(M^{\str} \models \neg \varphi) < p && \forall \str \in \Pi_S.
    \end{align*}
\end{problem}

Prior works generate shields under a two major assumption: (i) a \emph{finite state} MDP that describes the safety aspects of the system exists, and (ii) this MDP is known a priori. However, neither of these assumptions hold in our setting. In particular, a finite state MDP \emph{cannot} accurately model a continuous state system. Hence, traditional shielding techniques cannot be used to provide guarantees on the continuous state system. Similarly, even if traditional shielding techniques could be used, the system model is unknown, hence generating a \emph{correct} finite state model is a challenge.

Our approach to Problem~\ref{prblm:1} is based on creating a finite abstraction of System \eqref{true_dynamics} as an Interval MDP (IMDP) from $D$ using GP regression, specifically Deep Kernel Learning (DKL) \cite{reed2023promises}. DKL is an extension of GP regression that augments the input to a base kernel with a neural network prior. This formulation is shown to enable more accurate mean predictions, smaller posterior variance predictions, and faster abstraction generation \cite{reed2023promises}. We then reason about all possible traces of the system over the IMDP and reject actions that violate our desired probability threshold. Our result is the maximally permissive set of policies $\Pi_S$ that contain no strategies that violate $\varphi$. 

\begin{remark}
    We emphasize that in our formulation the process noise has a bounded support, as when noise has unbounded support it is impossible to guarantee safety (LTL formula $\varphi$) for an infinite time horizon since, for a bounded set $X$, the probability of remaining in $X$ under infinite support noise is zero over infinite trajectories. When used in a RL setting, $\Pi_S$ constitutes a shield as in Figure~\ref{fig:shield_arch}, where an agents action is only corrected if the choice is not in the set of shield policies. 
\end{remark}

\begin{figure}[t]
    \centering
    \includegraphics[width=0.75\linewidth]{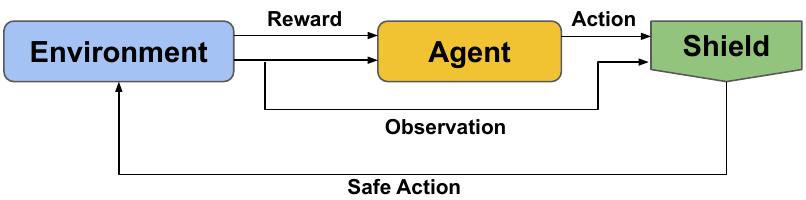}
    \caption{Post-Posed Shielded RL architecture. In our scenario, the shield can be taken as $\Pi_S$.}
    \label{fig:shield_arch}
\end{figure}

\section{Methodology}
    \label{method}

For our approach, we first use DKL to model dynamics $f$ from dataset $D$ and then construct a finite state abstraction in the form of an Interval MDP (IMDP) where the transition probabilities account for the learning error of DKL, the noise distribution, and the discretization of the continuous state space. 
We utilize methods that make use of GP regression when process noise can have bounded support, e.g., methods defined in \cite{jackson2021formal}. These methods enable us to make use of DKL to generate highly accurate predictions with tight regression error bounds. 

\subsection{Deep Kernel Learning}
DKL is an extension of GP regression. A GP is a collection of random variables, such that any finite collection of those random variables is jointly Gaussian \cite{rasmussen:book:2006}. GP regression places a prior over $f$ through use of a kernel function $\kappa: X \times X \rightarrow \reals_{\geq 0}$. Then, conditioned on a dataset $D$, a GP produces dimension-wise posterior predictions at point $x^*$ with a mean $\mu^{(i)}_{D}(x^*)$ and variance $\sigma^{(i), 2}_{D}(x^*)$ which are derived as: 
\begin{align}
    &\mu^{(i)}_{D}(x^*) = K_{x^*, \X}(K_{\X,\X} + \sigma_n^2 I)^{-1} Y, \label{mean_pred}\\
    &\sigma^{(i), 2}_{D}(x^*) = K_{x^*, x^*} - K_{x^*, \X}(K_{\X,\X} + \sigma_n^2 I)^{-1} K_{\X, x^*}, \label{covar_pred}
\end{align}
where $\X = [x_1, \ldots, x_{m}]^T$ and $Y = [x_1^{(i),+}, \ldots, x_m^{(i),+}]^T$ are respectively the vectors of input and output data, $K_{\X,\X}$ is a matrix whose $l$-th row and $j$-th column is $\kappa(x_l, x_j)$, and $K_{x,\X} = K_{\X, x}^T$ are vectors whose $l$-th entry is $\kappa(x, x_l)$. 
In DKL, a base kernel, e.g., the squared exponential kernel $\kappa(x,x') = \sigma_s \exp{-\|x-x'\|/2l^2}$, is composed with a neural network as $\kappa_{DKL}(x, x') =\kappa(\psi_w(x), \psi_w(x'))$, where $\psi_w: \reals^n \rightarrow \reals^s$ is a neural network parameterized by weights $w$. When $\psi_w$ is trained to model $f$, such a formulation can significantly improve the representational power of GPs while retaining an analytical posterior prediction.

DKL enables us to predict the one time-step propagation of System~\eqref{true_dynamics} with rigorously quantified error. This is, however, not sufficient for computation of a policy over trajectories of arbitrary length. To that end, we construct a finite abstraction of System~\eqref{true_dynamics} using the one time-step predictions and reason over the abstraction.

\subsection{IMDP Abstraction}
Our abstraction of System~\eqref{true_dynamics} is an IMDP \cite{givan2000bounded, Lahijanian:TAC:2015}.
\begin{definition} [IMDP] \label{def:imdp}
    An Interval MDP (IMDP) is a tuple $\I = (Q, U,\hat{P}, \check{P}, AP, L_\I)$, where $Q$ is a finite set of states, $U, AP$ are as in Def.~\ref{def:mdp}, $\check{P},\hat{P}: Q \times U \times Q \rightarrow [0, 1]$ are functions that define the lower and upper bounds, respectively, of the transition probability from state $q\in Q$ to state $q'\in Q$ under action $a \in U$, and $L_{\I }: Q \to 2^{AP}$ is a labeling function that assigns to each state $q \in Q$ a subset of $AP$.
\end{definition}
An adversary $\Delta: Q \times U \times Q \rightarrow [0,1]$ is a function which picks a valid transition probability distribution from $\check{P}, \hat{P}$ such that 
$\sum_{q' \in Q} \Delta(q, a, q') = 1$ and 
$\check{P}(q, a, q') \leq \Delta(q, a, q') \leq \hat{P}(q, a, q')$. An IMDP $\I$ under adversary $\Delta$, denoted as $\I^{\Delta}$ becomes an MDP. Then under an adversary $\Delta$ and policy $\str$, denoted $\I^{\str, \Delta}$, the probability of satisfying $\varphi$ is well defined.
Then, any policy that can guarantee safety under worst-case adversary on the IMDP provides safety guarantee for the continuous state system.

\textbf{States and Labels}. 
We first partition the space $X$ into a set of regions $\overline{Q} = \{q_0, q_1, \ldots, q_{|\overline{Q}|}\}$ such that the partitions respect the regions of interest $R$. Then, for each region of interest $r \in R$ there exists $Q_r \subseteq \overline{Q}$ such that $r = \cup_{q\in Q_r}q$ and for all $i \neq j$ $q_i \cap q_j = \emptyset$. We then define an extra region $q_u$ where $q_u = \reals^n / X$ and the states $Q$ of the IMDP model are defined as $Q = \overline{Q} \cup \{q_u\}$. With an abuse of notation we use $q$ to denote both the IMDP state and the corresponding region of space, i.e., $q \in Q$ and $q \subset \reals^n$. This format enables $L(x) = L(x')$ for all $x, x' \in q$, hence the IMDP labelling function $L_{\I }(q) = L(x)$.
Finally, the set of IMDP actions is given by $U$ and we assume that for each $q \in Q$ all actions are available.

\textbf{Transition Probabilities}. We start by learning a model of $f$ with DKL using dataset $D$, allowing us to estimate each component of $f$ with a $\mu_D^{(i)}$ as in \eqref{mean_pred}.
DKL is used to propagate region $q$ under action $a$, with the consideration of all possible process noise in a set $c \subseteq V$ as
\begin{multline}
    \text{Post}(q, a, c) = \{w \mid w^{(i)} = \mu_{D}^{(i)}(x) + v^{(i)}, \\
    x \in q, v \in c, i \in [1, \ldots, n] \}. \label{post_eq}
\end{multline}
As Post is generated from a learning model, we must also reason about the predictive error of DKL. Assumption \ref{assumption:1} allows for the generation of uniform error bounds on $\mu_D^{(i)}$ \cite{abbasi2013online, Chowdhury, reed2024error}, particularly,
\begin{proposition}
    [\cite{reed2024error}, Theorem 1] Let $B_i \geq \|f^{(i)}\|$, $G = (K_{\X,\X} + \sigma_n^2I)^{-1}$, $W_x = K_{x,\X}G$, $\gamma \leq f^{(i)}(\X)^T G f^{(i)}(\X)$, and
    $\lambda_x = 4\sigma_v^2K_{x,\X}G^2 K_{\X,x}.$ Then if $\|\pv\|_{\infty} \leq \sigma_v$,
    \begin{align}
        \mathbb{P}\Big(|\mu_{D}^{(i)}(x) - f^{(i)}(x, a)| \leq \epsilon^{(i)}(\delta, x) \Big) \geq 1 - \delta,
    \end{align}
    where $\epsilon^{(i)}(x,\delta) = \sigma_{D}^{(i)}(x)\sqrt{B_i^2 - \gamma} + \sqrt{\frac{\lambda_x}{2}\text{ln}\frac{2}{\delta}}$
\end{proposition}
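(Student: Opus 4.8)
The plan is to express the posterior mean \eqref{mean_pred} as a linear functional of the noisy observations and split the error into a deterministic bias and a stochastic noise term. Writing the $i$-th output stack as $Y = f^{(i)}(\X) + \mathbf{e}$, where $f^{(i)}(\X) = (f^{(i)}(x_1,a),\dots,f^{(i)}(x_m,a))^T$ and $\mathbf{e} = (v_1^{(i)},\dots,v_m^{(i)})^T$ collects the $i$-th components of the noise realizations, substitution into \eqref{mean_pred} gives $\mu_D^{(i)}(x) = W_x f^{(i)}(\X) + W_x \mathbf{e}$ with $W_x = K_{x,\X}G$. Hence
\[
    \mu_D^{(i)}(x) - f^{(i)}(x,a) = \underbrace{\big(W_x f^{(i)}(\X) - f^{(i)}(x,a)\big)}_{\text{bias}} + \underbrace{W_x \mathbf{e}}_{\text{noise}}.
\]
I would bound the two terms separately and combine them by the triangle inequality, charging the entire failure probability $\delta$ to the (only) stochastic term, since the bias admits a deterministic (sure) bound.

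For the bias I would invoke Assumption~\ref{assumption:1}. By the reproducing property, $f^{(i)}(x,a) = \langle f^{(i)}, \kappa(x,\cdot)\rangle_\kappa$ and $W_x f^{(i)}(\X) = \langle f^{(i)}, g_x\rangle_\kappa$ with $g_x = \sum_{j}(W_x)_j \kappa(x_j,\cdot)$, so the bias equals $\langle f^{(i)}, g_x - \kappa(x,\cdot)\rangle_\kappa$. A plain Cauchy--Schwarz step together with the identity $K_{\X,\X} = G^{-1} - \sigma_n^2 I$ shows $\|g_x - \kappa(x,\cdot)\|_\kappa^2 = \sigma_D^{(i)}(x)^2 - \sigma_n^2 K_{x,\X}G^2 K_{\X,x} \le \sigma_D^{(i)}(x)^2$, which already yields the crude bound $B_i\,\sigma_D^{(i)}(x)$. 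To sharpen the constant to $\sqrt{B_i^2 - \gamma}$ I would decompose $f^{(i)} = f_\parallel + f_\perp$ along the data subspace $\V = \mathrm{span}\{\kappa(x_j,\cdot)\}$ and its orthogonal complement. In the noiseless limit $g_x = P_\V \kappa(x,\cdot)$, so $g_x - \kappa(x,\cdot)$ is orthogonal to $\V$ and pairs only with $f_\perp$; Cauchy--Schwarz then produces the factor $\|f_\perp\|_\kappa$ in place of $\|f^{(i)}\|_\kappa$. Identifying the captured energy $\|f_\parallel\|_\kappa^2 = f^{(i)}(\X)^T G f^{(i)}(\X) \ge \gamma$ gives $\|f_\perp\|_\kappa^2 \le B_i^2 - \gamma$, and hence $|W_x f^{(i)}(\X) - f^{(i)}(x,a)| \le \sigma_D^{(i)}(x)\sqrt{B_i^2 - \gamma}$.

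For the noise term, $W_x \mathbf{e} = \sum_{j=1}^m (W_x)_j v_j^{(i)}$ is a sum of independent (the samples being i.i.d.), zero-mean random variables, each supported on $[-|(W_x)_j|\sigma_v,\,|(W_x)_j|\sigma_v]$ since $\|\pv\|_\infty \le \sigma_v$. Hoeffding's inequality then gives $\mathbb{P}(|W_x\mathbf{e}| \ge t) \le 2\exp\!\big(-2t^2 / \sum_j (2|(W_x)_j|\sigma_v)^2\big)$; recognizing $\sum_j (W_x)_j^2 = K_{x,\X}G^2 K_{\X,x}$ (using $G = G^T$) makes the denominator exactly $\lambda_x$, and solving $2\exp(-2t^2/\lambda_x) = \delta$ for $t$ yields $|W_x \mathbf{e}| \le \sqrt{\tfrac{\lambda_x}{2}\ln\tfrac{2}{\delta}}$ with probability at least $1-\delta$. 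The triangle inequality then assembles $\epsilon^{(i)}(x,\delta)$. I expect the main obstacle to be the bias term: the crude Cauchy--Schwarz estimate only gives $B_i$, and the tighter $\sqrt{B_i^2 - \gamma}$ demands the orthogonal-projection argument in $\mathcal{H}_\kappa$ together with careful bookkeeping of the regularizer $\sigma_n^2 I$, which perturbs the clean noiseless identities $g_x = P_\V \kappa(x,\cdot)$ and $\|f_\parallel\|_\kappa^2 = f^{(i)}(\X)^T G f^{(i)}(\X)$; checking that $\gamma \le f^{(i)}(\X)^T G f^{(i)}(\X)$ still controls the residual after this perturbation is the crux. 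The Hoeffding step is routine once the variance proxy $\lambda_x$ is identified, though it does rely on $p(\pv)$ being zero-mean.
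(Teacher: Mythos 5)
The paper does not prove this proposition at all --- it is imported verbatim from \cite{reed2024error}, so there is no in-text argument to compare against. That said, your reconstruction matches the structure of the proof in the cited reference: the decomposition $\mu_D^{(i)}(x) - f^{(i)}(x,a) = \bigl(W_x f^{(i)}(\X) - f^{(i)}(x,a)\bigr) + W_x\mathbf{e}$, an RKHS Cauchy--Schwarz argument yielding the deterministic bound $\sigma_D^{(i)}(x)\sqrt{B_i^2-\gamma}$, and Hoeffding's inequality with variance proxy $\lambda_x = 4\sigma_v^2 K_{x,\X}G^2K_{\X,x}$ for the stochastic part are exactly the ingredients used there, and your algebra (e.g.\ $\|g_x-\kappa(x,\cdot)\|_\kappa^2 = \sigma_D^{(i)}(x)^2 - \sigma_n^2 K_{x,\X}G^2K_{\X,x}$ and $\sum_j (W_x)_j^2 = K_{x,\X}G^2K_{\X,x}$) checks out. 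The two caveats you flag are genuine and worth stating plainly: (i) the sharpening from $B_i$ to $\sqrt{B_i^2-\gamma}$ does not follow from a literal orthogonal projection once the regularizer $\sigma_n^2 I$ is present; the cited proof obtains it by maximizing the bias over the feasible set $\{f : \|f\|_\kappa\le B_i,\ f(\X)\ \text{fixed}\}$, and the hypothesis $\gamma \le f^{(i)}(\X)^T G f^{(i)}(\X)$ is precisely what keeps the resulting bound valid after that perturbation --- so your sketch of this step is a plan rather than a proof; and (ii) Hoeffding controls $|W_x\mathbf{e} - \expect[W_x\mathbf{e}]|$, so concluding $|W_x\mathbf{e}| \le \sqrt{(\lambda_x/2)\ln(2/\delta)}$ requires the noise to be zero-mean (or an additional offset term), an assumption the paper never states for $p(\pv)$ but which is implicit in the quoted formula. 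Neither caveat invalidates your approach; both are resolved in \cite{reed2024error}.
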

For a region $q$, let $\epsilon^{(i)}_q(\delta) = \sup_{x \in q}\epsilon^{(i)}(\delta, x)$, and let the worst case dimension-wise regression error over a region $q$ and action $a$ be $e^{(i)}(q,a) = \sup_{x \in q} |\mu_{D}^{(i)}(x) - f^{(i)}(x, a)|$. Then,
\begin{align}
    \mathbb{P}\Big(e^{(i)}(q,a) \leq \epsilon^{(i)}_q(\delta) \Big) \geq 1 - \delta.
\end{align}
Then the transition kernel from $x \in q$ under action $a$ to $q'$ can be lower and upper bounded according to the following Proposition.
\begin{proposition}
[\cite{skovbekk2023formal, jackson2021formal}, Theorems 1] Let $\underline{q}(\epsilon)$ and $\bar{q}(\epsilon)$ be regions that are obtained by shrinking and expanding each dimension of $q$ by the corresponding scalars in $\epsilon$, and let $C$ be a partition of $V$ such that $\sum_{c \in C} \int_c p(\pv)d\pv = 1$. Then,
\begin{multline}
    \min_{x\in q}T(x,a, q') \geq \sum_{c\in C} \bigg( \left(1 - \mathds{1}_{X \setminus \underline{q'}(\epsilon_q)}(\text{Post}(q, a, c))\right) \\
    \prod_{i=1}^{n} \mathbb{P}\left(e^{(i)}(q,a) \leq \epsilon_{q}^{(i)}(\delta)\right) \bigg) \mathbb{P}(c)
    \label{eq:lower-bound}
\end{multline}
\begin{multline}
    \max_{x\in q}T(x,a, q') \leq \sum_{c\in C} \bigg( 1 - \prod_{i=1}^{n} \mathbb{P}\left(e^{(i)}(q,a) \leq \epsilon_{q}^{(i)}(\delta)\right)  \\
    \left(1 - \mathds{1}_{\overline{q'}(\epsilon_q)}(\text{Post}(q, a, c)\right) \bigg)  \mathbb{P}(c)
    \label{eq:upper-bound}
\end{multline}
where $\mathds{1}_W(H) = 1$ if $W \cap H \neq \emptyset$ and $0$ otherwise.
\end{proposition}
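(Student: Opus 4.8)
The plan is to start from the transition kernel $T(x,a,q')$, the probability that $f(x,a)+\pv\in q'$ over the noise $\pv$, and to split it over the partition $C$ of the noise support as $T(x,a,q') = \sum_{c\in C}\int_c \mathds{1}[f(x,a)+v \in q']\,p(v)\,dv$. The whole argument then reduces to bounding, cell by cell, the inner indicator $\mathds{1}[f(x,a)+v \in q']$ using only the learned mean $\mu_D$ and the regression-error bound, since the true $f$ is unknown.

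First I would establish the containment relations induced by the error event $E$ that $e^{(i)}(q,a)\le\epsilon_q^{(i)}(\delta)$ holds for every $i$. On $E$ we have $|f^{(i)}(x,a)-\mu_D^{(i)}(x)|\le\epsilon_q^{(i)}(\delta)$ uniformly over $x\in q$, so each true successor $f(x,a)+v$ lies within the $\epsilon_q$-box of the model prediction $\mu_D(x)+v\in\text{Post}(q,a,c)$. This yields two implications. (i) If $\text{Post}(q,a,c)\cap(X\setminus\underline{q'}(\epsilon_q))=\emptyset$, then, using that $\text{Post}$ is connected (the continuous image of the connected region $q$ under $\mu_D$, Minkowski-summed with the noise cell $c$) and hence cannot straddle $\underline{q'}(\epsilon_q)$ and $\reals^n\setminus X$, the successor is guaranteed to satisfy $f(x,a)+v\in q'$. (ii) Conversely, if $f(x,a)+v\in q'$ then $\mu_D(x)+v\in\overline{q'}(\epsilon_q)$, i.e.\ $\text{Post}(q,a,c)\cap\overline{q'}(\epsilon_q)\neq\emptyset$. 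Because $\text{Post}$ already quantifies over all $x\in q$, both implications hold uniformly on the region, which is exactly what lets me pass to $\min_{x\in q}$ and $\max_{x\in q}$ at the end without additional work.

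For the lower bound I would retain only the cells for which implication (i) applies, i.e.\ those with $1-\mathds{1}_{X\setminus\underline{q'}(\epsilon_q)}(\text{Post}(q,a,c))=1$; each such cell contributes its noise mass $\mathbb{P}(c)$ weighted by the probability of $E$, which factorizes as $\prod_{i=1}^{n}\mathbb{P}(e^{(i)}(q,a)\le\epsilon_q^{(i)}(\delta))$ by the dimension-wise independence of the regression errors, producing \eqref{eq:lower-bound}. For the upper bound I would instead bound the indicator from above: a cell may contribute its full mass $\mathbb{P}(c)$ whenever $\text{Post}$ meets $\overline{q'}(\epsilon_q)$ (so (ii) does not rule out $q'$), and otherwise the successor can reach $q'$ only if $E$ fails, contributing at most $(1-\prod_{i=1}^{n}\mathbb{P}(e^{(i)}(q,a)\le\epsilon_q^{(i)}(\delta)))\mathbb{P}(c)$; combining the two cases yields the single expression \eqref{eq:upper-bound}.

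The step I expect to be the main obstacle is the clean separation and multiplicative combination of the two independent sources of uncertainty: the regression/training uncertainty carried by $\mathbb{P}(e^{(i)}(q,a)\le\epsilon_q^{(i)}(\delta))$ and the transition noise carried by $\mathbb{P}(c)$. Justifying the product form requires both the dimension-wise independence of the Gaussian process error events, so that the probability of $E$ factorizes across $i$, and the independence of the learned model from the fresh noise realization $\pv$. A secondary subtlety, already invoked above, is the asymmetric use of $X\setminus\underline{q'}(\epsilon_q)$ in the lower bound against $\overline{q'}(\epsilon_q)$ in the upper bound; handling it rigorously rests on the connectedness of $\text{Post}$ together with a careful treatment of the domain boundary and the absorbing unsafe state $q_u=\reals^n\setminus X$.
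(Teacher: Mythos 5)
The paper gives no proof of this proposition at all --- it is imported verbatim from the cited references --- so there is no internal argument to compare against; your reconstruction is essentially the standard argument from those sources and is sound. You correctly decompose $T(x,a,q')$ over the noise partition $C$, condition on the uniform error event $E$, and exploit the shrink/expand duality: on $E$, $\mu_{D}(x)+v\in\underline{q'}(\epsilon_q)$ forces $f(x,a)+v\in q'$, while $f(x,a)+v\in q'$ forces $\mu_{D}(x)+v\in\overline{q'}(\epsilon_q)$; uniformity over $x\in q$ (built into $\text{Post}$) then lets you pass to $\min$ and $\max$, and dimension-wise independence of the per-output regression errors, together with independence of the learned model from the fresh noise draw, gives the product form. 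The one spot where your implication (i) is not airtight is the corner case in which $\text{Post}(q,a,c)$ lies entirely in $\reals^n\setminus X$: there $\mathds{1}_{X\setminus\underline{q'}(\epsilon_q)}(\text{Post}(q,a,c))=0$ even though the successor lands in $q_u$ rather than $q'$, so connectedness of $\text{Post}$ alone does not rescue the lower bound. This is really an artifact of the proposition taking the complement relative to $X$ instead of $\reals^n$, and the cited works dispose of it by treating the absorbing state $q_u$ separately; you flag the issue but should state explicitly that in that subcase the cell's contribution must be assigned to $q_u$, not $q'$.
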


Finally the IMDP transition probability bounds are set to the bounds in \eqref{eq:lower-bound}-\eqref{eq:upper-bound}:
\begin{align}
    \check{P}(q, a, q') \leq \min_{x\in q}T(x,a, q'), \\
    \hat{P}(q, a, q') \geq \max_{x\in q}T(x,a, q'). 
\end{align}
The formulation of these intervals accounts for uncertainty due to modelling error, discretization of a continuous state space, and process noise which ensures that the IMDP abstraction contains the true MDP. Hence we can use the IMDP to generate a shield and naturally extend the guarantees to the unknown system.

\subsection{Shield Algorithm}

Note that the safe LTL formula $\varphi$ is a complement of syntactically safety (co-safe) LTL formula, which can be accepted in finite time \cite{kupferman2001model}.  Hence, from $\neg \varphi$, a deterministic finite automaton can be constructed that precisely accepts the set of traces that satisfy $\neg \varphi$ (violate $\varphi$)~\cite{kupferman2001model}.

\begin{definition} [DFA]  \label{def:dfa} 
A \emph{deterministic finite automaton} (DFA) constructed from a co-safe LTL formula $\neg \varphi$ is a tuple $\A = (Z, z_0, 2^{\Pi}, \delta, Z_f)$, where $Z$ is a finite set of states, $z_0 \in Z$ is an initial state, $2^{\Pi}$ is a finite set of input symbols, $\delta : Z \times 2^{\Pi} \rightarrow Z$ is a transition function, and $Z_f \subseteq Z$ is the set of final (accepting) states.
\end{definition}

A finite run on $\A$ is a sequence of states $\z = z_0z_1...z_{n+1}$ induced by a trace $\rho = \rho_0\rho_1...\rho_n$ where $\rho_i \in 2^{\Pi}$ and $z_{i+1} = \delta(z_i, \rho_i)$. A finite run is accepting if, for some $i\leq n$, $z_i \in Z_f$. If a run is accepting, its associated trace is accepted by $\A$. 
The set of all traces that are accepted by $\A$ is called the language of $\A$. The language of $\A$ is equal to the language of $\neg\varphi$, i.e., trace $\rho$ is accepted by $\A$ iff $\rho \models \neg\varphi$.
With a DFA constructed from $\neg \varphi$ and an IMDP $\I$, we can generate a product IMDP $\I_{\A}$, which captures 
the paths of $\I$ whose observation traces satisfy $\neg \varphi$ (violate safety specification $\varphi$).

\begin{definition}
    [Product IMDP] The product IMDP $\I_\A = \I \times \A$ is a tuple $\I_\A = (\bar{X}, U, \check{\Delta}, \hat{\Delta}, F)$ where $\bar{X} = Q \times Z$ is a set of product states, $U$ is as in Definition \ref{def:imdp}, $\check{\Delta}, \hat{\Delta}: \bar{X} \times U \times \bar{X} \rightarrow [0, 1]$ are such that $\check{\Delta}((q,z), a, (q',z')) = \check{P}(q,a,q')$ if $z' = \delta(z, L_\I(q))$ and 0 otherwise, similarly for $\hat{\Delta}$, and $F = Q \times Z_f$ is the set of final states. 
\end{definition}

We overload notation for an adversary on the product IMDP as $\Delta: \bar{X} \times U \times \bar{X} \rightarrow [0,1]$. 
With the Product IMDP, we can then find a set of safe policies that can guarantee no violations to $\varphi$ with at least probability $1-p$. Our goal is to find the maximal set of policies, which we refer to as a shield, that can guarantee safety specification $\varphi$.
We note that stationary policies on $\I_\A$ are sufficient to optimize (maximize) for the probability of satisfying $\neg \varphi$, 
i.e., there exists stationary $\str^*$ such that  $$\max_{\Delta} P(\I_\A^{\str^*,\Delta} \models \neg\varphi) = \max_{\str} \max_{\Delta} P(\I_\A^{\str,\Delta} \models \neg\varphi).$$  
Our approach is based on removal of such optimal policies until $\max_{\pi} \max_{\Delta} P(\I_\A^{\str,\Delta} \models \neg\varphi) \leq p$.
We avoid the computational intractability that can arise from searching for history dependent policies by removing the set of stationary policies on the product that can satisfy $\neg \varphi$. 

We make use of value iteration to find the set of actions that maximize the probability of satisfying $\neg \varphi$ and iteratively remove these actions from the product until we reach a fixed point. Algorithm \ref{alg:action_removal_2} shows how to find this set of safe policies. $\Delta$ is chosen such that $\sum_{\bar{x}' \in \bar{X}} \Delta(\bar{x},a,\bar{x}') V^{k}(\bar{x}')$ is maximized for each pair $(\bar{x},a) \in \bar{X} \times U$ and can be obtained through an ordering of states in $\I_\A$ according to $V^k$ and the transition probability intervals.

\begin{algorithm}[t]
\caption{Maximally Permissive Shield}\label{alg:action_removal_2}
\SetAlgoLined
\KwIn{$\bar{X}, F, U, p, \epsilon$}
\KwOut{$V^{k}, \Pi_S^k$}
$\Pi_S^0(\bar{x}) \gets U$\;
$V^0(\bar{x}) \gets 0 \; \forall \bar{x} \in \bar{X}$\;
$V^0(\bar{x}) \gets 1 \; \forall \bar{x} \in F$\;
$k \gets 0$\;
$\text{converged} \gets 0$\;
\While{not converged}{
    $\Pi_S^{k+1}(\bar{x}) \gets \{a \in \Pi_S^{k}(\bar{x})  \mid \sum_{\bar{x}' \in \bar{X}} \Delta(\bar{x},a,\bar{x}') V^{k}(\bar{x}') < p\}$\;
    \If{$\Pi_S^{k+1}(\bar{x}) = \emptyset$}{
        $\Pi_S^{k+1}(\bar{x}) \gets \argmin_{a \in \Pi_S^{k}(q)} \sum_{\bar{x}' \in \bar{X}} \Delta(\bar{x},a,\bar{x}') V^{k}(\bar{x}')$\;
    }
    \If{$\Pi_S^{k+1} \neq \Pi_S^{k}$}{
        $V^{k+1} \gets V^0$\;
        \tcp{Restart value iteration}
    }
    \Else{
        $V^{k+1}(\bar{x}) \gets \max_{a \in \Pi_S^{k+1}(\bar{x})} \sum_{\bar{x}' \in \bar{X}} \Delta(\bar{x},a,\bar{x}') V^{k}(\bar{x}')$\;
        \If {$|V^{k+1}(\bar{x}) - V^k(\bar{x})| < \epsilon \; \forall \bar{x} \in \bar{X}$}{
            $\text{converged} \gets 1$\;
        }
    }
    $k \gets k + 1$\;
}
\Return{$V^{k}, \Pi_S^k$}\;
\end{algorithm}

The algorithm takes in the product states $\bar{X}$, accepting states $F$ (i.e., the set of states that satisfy $\neg\varphi$), the action set $U$, a violation probability $p$, and a convergence threshold $\epsilon$ and returns the set of actions available at each state $\Pi_S$ and the worst case probability of being unsafe $V$ at each product state under any policy contained in the shield $\Pi_S$.

Lines 1-3 initialize the shield as allowing all actions at each state and the value as 1 in the accepting states of $\neg \varphi$ and 0 everywhere else. Line 13 updates the values at each state, choosing the value and $\Delta$ corresponding the action that maximizes the probability of reaching $F$, which results in a monotonically increasing value. Line 7 removes actions from the shield that exceed the allowed probability of satisfying $\neg \varphi$. Lines 8-9 ensure that each state retains its safest action in the event no action satisfies the threshold. Note that while this allows the shield to retain violating actions, the associated value allows us to identify which states are safe. This also ensures that the probability of the state satisfying $\neg \varphi$ can be calculated in a non-conservative manner under the remaining strategies.
Line 10 then checks if the shield has changed from the prior time step; if so, the value iteration is restarted with the new constrained action space. 

The following theorem shows that Algorithm~\ref{alg:action_removal_2} is guaranteed to terminate in finite time.

\begin{theorem}
    [Run Time Complexity] With state space $\bar{X}$ and action space $U$, Alg. \ref{alg:action_removal_2} is guaranteed to terminate in finite time and has a run time complexity of $\mathcal{O}(|\bar{X}|^2 \cdot (|U|!))$. \label{thm:run_time}
\end{theorem}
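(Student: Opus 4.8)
The plan is to prove the two claims separately: finite termination first, then the runtime bound, since the termination argument supplies the iteration count needed for the complexity analysis.

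For termination, the key observation is that the candidate shield $\Pi_S^k$ is monotonically non-increasing in $k$. Line 7 replaces $\Pi_S^k(\bar{x})$ by one of its subsets, and the fallback on Lines 8--9 selects an action already present in $\Pi_S^k(\bar{x})$; hence $\Pi_S^{k+1}(\bar{x}) \subseteq \Pi_S^k(\bar{x})$ for every $\bar{x}$, while Lines 8--9 keep each $\Pi_S^k(\bar{x})$ non-empty. Since each state starts with $|U|$ actions and retains at least one, the total number of action removals over the whole run is at most $|\bar{X}|(|U|-1)$. I would then observe that the reset of $V$ to $V^0$ is triggered (via the check on Line 10) only when the shield strictly shrinks, so there are at most $|\bar{X}|(|U|-1)$ such restarts. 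Between two consecutive restarts the action restriction is fixed, so the update on Line 13 is ordinary reachability value iteration on a finite IMDP: from the initialization $V^0$ the iterates are non-decreasing (Line 13 maximizes over the previous vector) and bounded above by $1$, hence they converge to the least fixed point of the reachability operator and the $\epsilon$-test is satisfied after finitely many sweeps. Finitely many restarts, each followed by finitely many sweeps, yields finite termination.

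For the runtime, I would write the total cost as (number of while-loop iterations) $\times$ (cost per iteration). Each iteration evaluates, for every pair $(\bar{x},a)$ with $a$ in the current shield, the worst-case expectation $\sum_{\bar{x}'} \Delta(\bar{x},a,\bar{x}') V^k(\bar{x}')$; the maximizing adversary is obtained by ordering the successor states by $V^k$ and greedily assigning probability mass within the interval bounds $[\check{\Delta},\hat{\Delta}]$, which costs $\mathcal{O}(|\bar{X}|)$ per pair once the ordering is available. Summing over the at most $|\bar{X}|\,|U|$ state--action pairs, a single sweep costs $\mathcal{O}(|\bar{X}|^2\,|U|)$. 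It then remains to bound the number of iterations, which I would control by the number of distinct shield configurations the algorithm can traverse: each shield lies on a monotone-decreasing chain in the lattice of action-set configurations, and bounding the number of such configurations visited before convergence by $\mathcal{O}((|U|-1)!)$ in the worst case --- reflecting the possible orders in which actions are eliminated --- yields, after multiplication by the per-sweep cost, the stated $\mathcal{O}(|\bar{X}|^2\cdot(|U|!))$.

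The main obstacle is the iteration-count bound. Two subtleties require care. First, the reset of $V$ to $V^0$ after every shield change could, a priori, let the shield oscillate; this is precisely where monotonicity of $\Pi_S^k$ is indispensable, since a removed action is never reinstated, so the resets cannot create a cycle. Second, I must confirm that the value iteration launched after the \emph{final} shield change still meets the $\epsilon$-tolerance in finitely many sweeps, which follows from the standard convergence of reachability value iteration on a fixed finite IMDP. Pinning down the $|U|!$ factor is the loosest step: it is a conservative over-count of the distinct shield configurations traversed, and a sharper analysis would likely replace it with a polynomial factor, but the stated bound follows directly from this crude count combined with the $\mathcal{O}(|\bar{X}|^2\,|U|)$ per-sweep cost.
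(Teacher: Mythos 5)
Your proposal takes essentially the same route as the paper: a per-sweep cost of $\mathcal{O}(|\bar{X}|^2|U|)$ for value iteration, multiplied by a worst-case factorial count of restarts arising from actions being removed one at a time, with termination following from the monotone shrinkage of the shield. Your termination argument is in fact more detailed than the paper's (which simply appeals to the finite termination of standard value iteration restarted after each of finitely many action removals), and the looseness you flag in the $|U|!$ factor is present in the paper's own proof as well.
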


\begin{proof}
    It is well known that a traditional non-discounted Value Iteration algorithm has a computational complexity of $\mathcal{O}(|\bar{X}|^2|U|)$ and terminates in finite time \cite{sutton2018reinforcement}. Algorithm \ref{alg:action_removal_2} iteratively reduces the size of the action space $|U|$ and restarts value iteration under the new constrained space, which increases complexity to $|U|!$ as in the worst case the algorithm would remove only a single action in each iteration until only one action remains. This would result in value iteration with only one action at each state, which is guaranteed to terminate in finite time.
\end{proof}
While Algorithm \ref{alg:action_removal_2} has an exponential complexity in $|U|$, typically $|U| \ll |\bar{X}|$ and it is unlikely to reach this complexity as supported by empirical evidence in our experiments. Similarly while the worst case complexity results in removing all actions, Algorithm \ref{alg:action_removal_2} is guaranteed to return the maximal set of policies that can guarantee satisfaction of $\varphi$.

\begin{theorem} [Maximal Set of Safe Policies]
    Let $V^k$ and $\Pi_S^k$ be the output of Alg.~\ref{alg:action_removal_2}, and define $S = \{\bar{x} \mid V^k(\bar{x}) < p\}$.
    Then, from every $\bar{x} \in S$, there exists no policy in the shield $\Pi_s^k$ that allows for the satisfaction of $\neg \varphi$ with a probability $\geq p$ under any adversary $\Delta$, i.e.,
    $$\mathbb{P}(\I_{\A}^{\str \in \Pi_S, \Delta} \models \neg \varphi \mid \bar{x}_0 \in S, \Delta) < p.$$
    and $\Pi_S^k$ retains the maximal set of policies.
    \label{thm:maximal}
\end{theorem}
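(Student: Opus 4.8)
\emph{Plan.} The statement bundles two claims --- a probabilistic safety bound (soundness) and maximality --- and I would derive both from a single characterization of the output value $V^k$ as a robust reachability probability on the product $\I_\A$.

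\emph{Step 1 (reduce $\neg\varphi$ to reachability and identify $V^k$).} Since $\neg\varphi$ is co-safe it is accepted by $\A$ in finite time, so for any policy $\str$ and adversary $\Delta$ the event $\I_\A^{\str,\Delta}\models\neg\varphi$ coincides with eventually reaching the accepting set $F$; hence $\mathbb{P}(\I_\A^{\str,\Delta}\models\neg\varphi\mid \bar{x}_0=\bar{x})$ equals the probability of reaching $F$. Because stationary policies suffice (as noted just before the algorithm) and $F$ is absorbing with value $1$, the update on Line~13 together with the value-maximizing choice of $\Delta$ is exactly the robust reachability Bellman operator for IMDPs; initialized at $V^0=\mathds{1}_F$ it is monotone and bounded, so value iteration converges from below to the least fixed point, which equals $\max_{\str}\max_{\Delta}\mathbb{P}(\text{reach }F)$. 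By Theorem~\ref{thm:run_time} the loop terminates, and the convergence flag is set only when the shield is stable and $\lvert V^{k+1}-V^k\rvert<\epsilon$; thus at termination, up to the tolerance $\epsilon$, the output satisfies the identity
\[
V^k(\bar{x}) \;=\; \max_{\str\in\Pi_S^k}\ \max_{\Delta}\ \mathbb{P}(\I_\A^{\str,\Delta}\models\neg\varphi\mid \bar{x}_0=\bar{x}).
\]

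\emph{Step 2 (soundness).} With this identity the bound is immediate. For $\bar{x}\in S$ we have $V^k(\bar{x})<p$ by definition of $S$, and since $V^k(\bar{x})$ is the \emph{maximum} of the violation probability over all shield policies and all adversaries, every $\str\in\Pi_S^k$ and every $\Delta$ satisfy $\mathbb{P}(\I_\A^{\str,\Delta}\models\neg\varphi\mid\bar{x}_0=\bar{x})\le V^k(\bar{x})<p$. No separate argument that trajectories remain inside $S$ is needed, as the reachability value already quantifies over the entire future under the shield. States at which Lines~8--9 retain a violating action have $V^k\ge p$ and therefore lie outside $S$, consistent with the theorem's restriction to $S$.

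\emph{Step 3 (maximality --- the main obstacle).} Here I would show that every action deleted on Line~7 is excluded from \emph{every} policy keeping the violation probability below $p$, so that $\Pi'(\bar{x})\subseteq\Pi_S^k(\bar{x})$ for any safe assignment $\Pi'$. The shields form a decreasing sequence, and robust reachability is monotone in the allowed action set (shrinking the shield cannot raise $V^*$), so $\Pi_S^k$ is a \emph{fixed point}: at convergence no retained action exceeds the threshold. The delicate part --- the crux of the proof --- is ruling out \emph{over}-removal. Deleting an action based on the value \emph{converged under the current, still-large shield} would be too aggressive, since that value dominates the value under the final shield and could cross $p$ for an action that is in fact safe once more unsafe actions are pruned. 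What rescues maximality is the restart on Line~11: after each deletion the value is reset to $V^0=\mathds{1}_F$ and rebuilt only through currently-retained transitions, so the algorithm expands the ``unavoidably unsafe'' set \emph{outward from $F$}, deleting the most directly-unsafe actions first and deflating successor values before they can spuriously condemn an upstream action. I would formalize this with an induction on the restart rounds, proving the invariant $\Pi_S^j\supseteq\Pi'$ is preserved by each deletion: whenever $\sum_{\bar{x}'}\Delta(\bar{x},a,\bar{x}')V^j(\bar{x}')\ge p$ triggers removal of $a$, the transitions inflating the relevant $V^j(\bar{x}')$ already lead into states certified unsafe under every safe assignment, so $a$ cannot belong to $\Pi'(\bar{x})$ at a safe state. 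Combining this invariant with the fixed-point property then yields that $\Pi_S^k$ contains every safe action, i.e.\ it retains the maximal set of safe policies.
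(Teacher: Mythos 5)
Your Steps 1 and 2 are sound and essentially coincide with the paper's argument for the probabilistic bound: reduce satisfaction of the co-safe formula $\neg\varphi$ to reachability of $F$, observe that the iterates increase monotonically toward the robust reachability value under the current shield, and conclude that $V^k(\bar{x})<p$ dominates the violation probability of every retained policy under every adversary from $\bar{x}\in S$. (Both you and the paper quietly pass over the fact that non-discounted reachability value iteration converges from below only asymptotically, so with stopping tolerance $\epsilon$ the set $S$ is correct only up to $\epsilon$; this is a shared, minor caveat.)

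The genuine gap is in Step 3, and you have correctly located where it is --- but the invariant you propose to close it does not hold. You claim that whenever the algorithm deletes $a$ at $\bar{x}$, the successor values $V^j(\bar{x}')$ responsible for pushing the backup above $p$ are inflated only by transitions into states ``certified unsafe under every safe assignment,'' so that $a$ lies outside every safe shield $\Pi'$. That is false in general: a successor value can be inflated by an action that is itself safe at $\bar{x}'$ (its own violation probability is below $p$) yet composes with $a$ to exceed $p$. Concretely, with $p=0.5$, let $a'$ at $\bar{x}_1$ reach $F$ w.p.\ $0.4$ and $\bar{x}_2$ w.p.\ $0.6$, and let $a''$ at $\bar{x}_2$ reach $F$ w.p.\ $0.4$ and a safe sink otherwise. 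Each action is individually sub-threshold, but the composed policy violates with probability $0.64$, so any safe shield must drop one of the two --- and which one is dropped is a genuine choice. Hence the union of all safe shields is not itself safe, a unique maximum permissive shield need not exist, and the invariant $\Pi_S^j\supseteq\Pi'$ for an arbitrary safe $\Pi'$ cannot be preserved by the deletions. At best one can argue the output is maximal in the weaker sense that no safe shield strictly contains it. For what it is worth, the paper's own proof is even terser at exactly this point --- it asserts that monotonicity of $V^k$ plus removal-at-threshold implies ``no potentially safe actions are removed'' --- so you have identified a real soft spot in the published argument; but your proposed repair, as stated, would not survive formalization.
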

\begin{proof} 
    It is well known that for a history dependent unsafe policy to exist, there must be at least one stationary unsafe policy \cite{sutton2018reinforcement}.
    Therefore, we must prove that Algorithm \ref{alg:action_removal_2} removes all stationary unsafe policies for any state $\bar{x} \in \bar{X}$ whose final value $V^k(\bar{x}) < p$.
    Note that the values in our value iteration formulation are the probability of satisfying $\neg \varphi_S$ under the policy and adversary that maximizes the probability of reaching $F$, these values monotonically increase as iterations increase (excluding resets), and Algorithm \ref{alg:action_removal_2} rejects any action that results in a value $\geq p$ prior to resetting under the new constrained action space. This ensures that no stationary unsafe policy remains in the product IMDP unless the value exceeds the threshold under all actions. Hence, all remaining actions after the value iteration converges represent the set of safe policies. Since the value monotonically increases and actions are removed as soon as the value exceeds a threshold, no potentially safe actions are removed and therefore the shield retains the maximal set of safe policies upon termination. 
\end{proof}

Theorem \ref{thm:maximal} identifies that the upper bound probability of satisfying $\neg \varphi$ over all possible traces on the IMDP is less than $p$ if $V^k(\bar{x}) < p$. The following theorem extends these guarantees to the true system defined by MDP $M$.
\begin{theorem}
    [Correctness] Let $\str$ be a policy for IMDP $\I_\A$ and MDP $M$ s.t. $\str(x) = \str(q) \> \forall x \in q$. Then for $q \in Q$ if $\max_{\Delta} \mathbb{P}(\I_{\A}^{\str \in \Pi_S, \Delta} \models \neg \varphi_S \> \mid \> \bar{x}_0 = (q, z_0), \Delta) < p$
    it holds that $\mathbb{P}(M^{\str \in \Pi_S} \models \neg \varphi_S \> \mid \> x_0 \in q) < p$.
\end{theorem}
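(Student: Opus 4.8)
The plan is to reduce the correctness claim to the \emph{containment} property the IMDP abstraction was built to enforce, together with the observation that satisfaction of $\neg\varphi$ depends on a trajectory only through the sequence of regions it visits. By construction, the bounds \eqref{eq:lower-bound}--\eqref{eq:upper-bound} yield $\check P(q,a,q') \leq \min_{x\in q} T(x,a,q')$ and $\hat P(q,a,q') \geq \max_{x\in q} T(x,a,q')$, so for every $x \in q$, every action $a$, and every target region $q'$ we have $\check P(q,a,q') \leq T(x,a,q') \leq \hat P(q,a,q')$. Thus the one-step transition distribution of the true system from \emph{any} concrete state $x \in q$ is a feasible distribution for $\I$ at state $q$, i.e.\ a legal choice for some adversary.

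First I would make precise the correspondence between trajectories of $M$ and paths of $\I_\A$. Because the partition respects the regions of interest and $L_\I(q) = L(x)$ for all $x\in q$, the observation trace of any continuous trajectory under $\str$ is exactly the label sequence of the regions it visits; since $\str(x) = \str(q)$ is likewise region-constant, the action applied at each step depends only on the current region. Consequently the region-indexed process induced by $M^{\str}$ started at $x_0 \in q$, run synchronously with the DFA, is a stochastic process on $\bar X = Q \times Z$ started at $(q,z_0)$. I would then realize this process on $\I_\A$ under a particular, in general history-dependent, adversary $\Delta^*$: at each step, when the true state sits at some $x \in q$, the adversary picks $\Delta^*(\cdot) = T(x,a,\cdot)$, which by the containment above is feasible. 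Since $\A$ accepts exactly the traces satisfying $\neg\varphi$, reaching $F$ in $\I_\A^{\str,\Delta^*}$ occurs with precisely the probability that $M^{\str}$ violates $\varphi$, giving $\mathbb{P}(M^{\str}\models\neg\varphi \mid x_0\in q) = \mathbb{P}(\I_\A^{\str,\Delta^*}\models\neg\varphi \mid \bar x_0 = (q,z_0))$.

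The final step is a domination argument. The maximizing adversary bounds reachability of $F$ over \emph{all} adversaries, including the history-dependent $\Delta^*$ encoding the true dynamics; this is the standard fact that value iteration on an IMDP computes the supremum of the reachability probability over the whole adversary class. Hence $\mathbb{P}(\I_\A^{\str,\Delta^*}\models\neg\varphi \mid (q,z_0)) \leq \max_\Delta \mathbb{P}(\I_\A^{\str\in\Pi_S,\Delta}\models\neg\varphi \mid \bar x_0 = (q,z_0)) < p$, which chains with the previous equality to yield $\mathbb{P}(M^{\str}\models\neg\varphi \mid x_0 \in q) < p$.

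I expect the main obstacle to be justifying rigorously that the continuous-state behavior is captured by a \emph{single} admissible IMDP adversary, even though the per-step kernel $T(x,a,\cdot)$ varies with the exact location $x$ inside $q$; the abstraction collapses a continuum of concrete kernels into one interval state, so I must argue the induced region-level process is a well-defined measure on $\bar X$ (appealing to measurability of $T$ from Def.~\ref{def:mdp} and the region-constant labeling and policy) and that pointwise containment at each visited state suffices for $\Delta^*$ to be legal. A secondary point worth stating explicitly is that the transition bounds already fold in the DKL error probabilities, so the containment invoked here is a deterministic consequence of the construction rather than an additional high-probability event, which avoids conflating the two sources of uncertainty.
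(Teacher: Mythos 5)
Your proposal is correct and follows essentially the same route as the paper: the paper's proof simply asserts that the IMDP transition intervals contain the true kernel (so the product value function upper-bounds the true violation probability) and delegates the details to the cited Theorem~2 of \cite{jackson2021formal}, which is precisely the containment-plus-adversary-domination argument you reconstruct. Your explicit treatment of realizing the continuous dynamics as an admissible (history-dependent) adversary is the content of that cited result, not a departure from the paper's approach.
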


\begin{proof}
    $V^k(\bar{x})$ is the maximum probability of satisfying $\neg \varphi$ on the product IMDP $\I_{\A}$ under any policy in $\Pi_S$. The initial DFA state is $z_0$, hence $\bar{x}_0 = (q, z_0)$ denotes the initial state of a trajectory on the IMDP that begins in state $q$. As the transition probability intervals of the IMDP incorporate the uncertainty due to process noise, the modelling error of $f$, and the error induced by discretizing a continuous state space the value function upper bounds the probability that $\omega_\px \models \neg \varphi$ given $\omega_\px(0) \in q$, i.e., $V^k(\bar{x}_0) \geq \mathbb{P}(M^{\str \in \Pi_S} \models \neg \varphi_S \> \mid \> x_0 \in q)$ as follows from \cite[Theorem 2]{jackson2021formal}.
\end{proof}

We note that the resolution of the discretization impacts the resulting set of safe policies. As the volume of each partition goes to zero, the IMDP abstraction more closely models the underlying MDP (the upper/lower bound transition probabilities approach the true probability) and the set of safe policies becomes less conservative. However, increasing the resolution of the discretization results in a an exponential growth in the number of states in the IMDP, which in turn vastly increases the computational cost of Algorithm \ref{alg:action_removal_2} as shown by Theorem \ref{thm:run_time}.

\section{Examples}
    \label{examples}
    In this section we demonstrate the effectiveness of our data-driven shields. 
We consider two nonlinear systems, one with 2D and the other with 6D state spaces, under complex safety specifications. 
In each case, we set safety probability threshold $p = 0.05$ in Algorithm \ref{alg:action_removal_2} and label $\reals^n / X$ as $b$. We report the times taken to generate the IMDP abstraction and the shield in Table \ref{tab:timing}. Experiments were run on an Intel Core i7-12700K CPU at 3.60GHz with 32 GB of RAM limited to 8 threads with the shield algorithm for the 6D model run on an AMD Ryzen 7 with 128 GB of RAM. 
\begin{figure*}[t]
    \centering
    \begin{subfigure}[c]{0.3\textwidth}
        \includegraphics[width=\columnwidth]{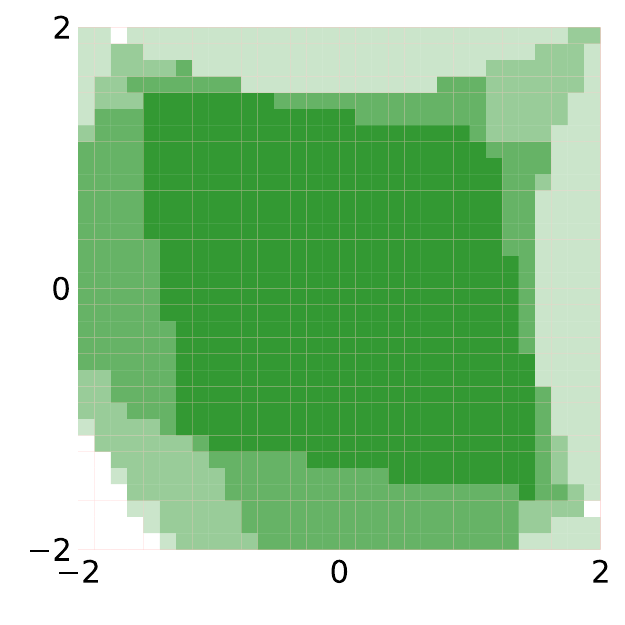}
        \caption{No Obstacles}
        \label{fig:1st env}
    \end{subfigure}
    \begin{subfigure}[c]{0.3\textwidth}
        \includegraphics[width=\columnwidth]{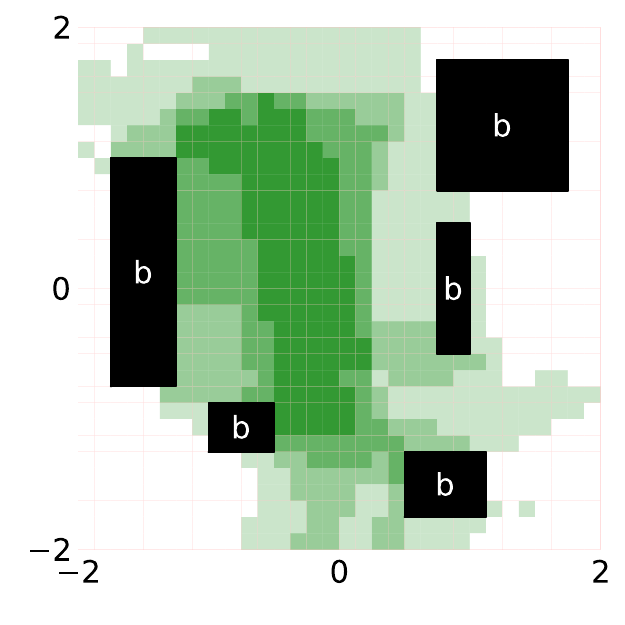}
        \caption{With Obstacles}
        \label{fig:2nd env}
    \end{subfigure}
    \begin{subfigure}[c]{0.3\textwidth}
        \includegraphics[width=\columnwidth]{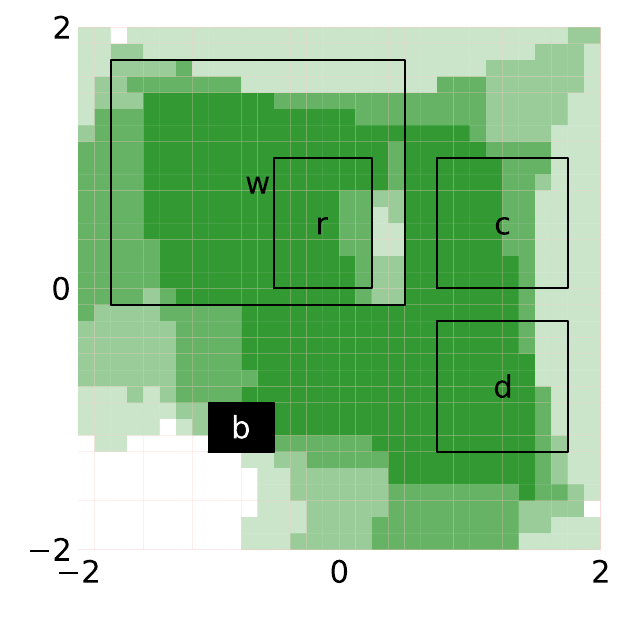}
        \caption{Complex Environment}
        \label{fig:complex_env}
    \end{subfigure}
    \begin{subfigure}[c]{0.05\textwidth}
        \includegraphics[width=\columnwidth]{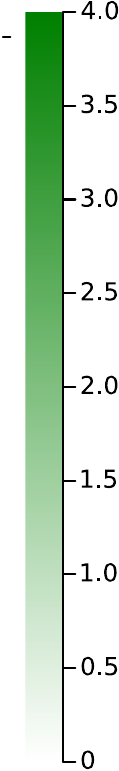}
    \end{subfigure}
    \caption{A visualization of the number of actions available at each state that can guarantee safety from the initial DFA state after using Algorithm \ref{alg:action_removal_2} on an IMDP generated from data. Dark green: all actions can be guaranteed safe to White: no action can be guaranteed safe for the given specification as shown in the legend on the right.
    }
    \label{fig:2d_visual}
\end{figure*}

\begin{table}[b]
    \centering
    \caption{Timing for IMDP and shield generation for each experiment.}
    \begin{tabular}{l  r  r}
         \toprule
         Model & Abstraction Time & Shield Time  \\
         \midrule
         2D: No Obstacles & 966.6 (sec) & 0.3 (sec) \\
         2D: Obstacles & 966.5 (sec) & 0.3 (sec) \\
         2D: Complex & 970.7 (sec) & 0.8 (sec) \\
         6D: Basilisk & 17 (hours) & 273.1 (sec)\\
         \bottomrule
    \end{tabular}
    \label{tab:timing}
\end{table}

\subsection{2D Nonlinear System}
We first consider the 2D nonlinear switched system with domain $X = [-2, 2]^2$ from \cite{reed2023promises, Adams:CSL:2022}, with additive process noise drawn from $\pv\sim\text{Uniform}(-0.01, 0.01)$. We gather 1000 data points for each mode and construct a DKL model, using all the data to train the NN and 100 data points for posterior predictions. 

To demonstrate the flexibility of our algorithm, we consider several environments and safety specifications. Figure~\ref{fig:2d_visual} shows the environments and the resulting available action space after generating a shield.
Our first scenario consists of an empty environment and a specification of $\varphi_{1} = \square(\neg b)$ (always remain in $X$) as shown in Fig.~\ref{fig:1st env}. Our second scenario is shown in Fig.~\ref{fig:2nd env}, which introduces obstacles that must be avoided and each obstacle is given the label $b$, hence this scenario retains the same safety specification $\varphi_{1}$ (``always remain in $X$ and do not collide with an obstacle"). 

Our final scenario uses the more complex environment shown in Figure~\ref{fig:complex_env} where an agent is trained to repeatedly cycle between a reward zone $r$ and a charging station $c$ without any other information of the space. The reward zone is contained inside of a wet area $w$ and a drying area is available at $d$. Our safety specification is ``if the agent is wet, do not go to the charging station unless the drying area is visited first or enough time has passed to dry off naturally (3 time steps). Also, do not collide with obstacles", which in safe LTL is
\begin{align}
    \varphi_{2} = \square \Big(w \rightarrow \big( (\neg c \, \U^{\leq 3} d) \vee (\square^{\leq 3} \neg c) \big) \Big) \wedge \square(\neg b).
\end{align}

To validate the shields, we generated 1-million random initial conditions and simulate for 1000 time steps under each shielded policy. In each case, the empirical probability of satisfying the safety specification is 100\% when shielded. In the more complex environment the system capable of visiting $r$ and $c$ infinitely often when shielding, showing that the shield is minimally invasive (contains the maximal set of safe policies) and does not impede tasking. We note that the majority of the time taken in generating the abstraction reported in Table \ref{tab:timing} is spent on calculating Post in \eqref{post_eq} and only 1-4 seconds is spent on generating the Product IMDP. Similarly, the time taken to run Algorithm \ref{alg:action_removal_2} is under 1 second for all cases.

\subsection{Space Applications}

To show scalability of our framework and its effectiveness on realistic systems, we consider a spacecraft safety application similar to \cite{Reed2024Shield}. In this scenario, an Earth observing satellite can switch between several modes of operation and a shield is designed to ensure the satellite remains functional throughout deployment. 
Spacecraft dynamics are highly nonlinear and high dimensional, with thousands of states needed to represent the interactions between subsystems. The dynamics act over continuous time and space with stochastic disturbances, which results in control problems with high complexity. 

To generate data and simulate the system, we make use of Basilisk \cite{kenneally2020basilisk}, a high-fidelity, flight-approved astrodynamics simulator which has recently been incorporated with DRL tools to identify near optimal policies for Earth observing tasks. While DRL enables a computationally tractable approach to control, the policies often lack any guarantees of safety. The goal of this experiment is to provide rigorous safety guarantees through the development of a shield. 
The action space in our scenario is $U = \{a_0, a_1, a_2, \ldots, a_{34}\}$ where $a_0$ is sun-pointing, $a_1$ is down-linking, $a_2$ is momentum dumping, and $a_3$-$a_{34}$ denote imaging modes for different upcoming targets. We generate 125000 data points under each mode using the Basilisk \cite{kenneally2020basilisk} astrodynamics simulator with a time discretization of one minute and generate a DKL model of the safety states with a 6-dimensional input space of power, normalized reaction wheel speeds, individual attitude rates ($x, y, z$), and pointing error to the sun. Stochasticity is induced by an external torque with a maximum magnitude of $10^{-4}$ Nm.

We use the safety specification $\varphi_3 =\bigcirc(\neg b) \wedge \bigcirc \bigcirc(\neg b)$
(``remain safe for the next two time steps"). This specification results in guarantees similar to the two-step shield proposed in~\cite{Reed2024Shield}, which has been shown be effective. 

\begin{figure}[t]
    \centering
    \includegraphics[width=\columnwidth]{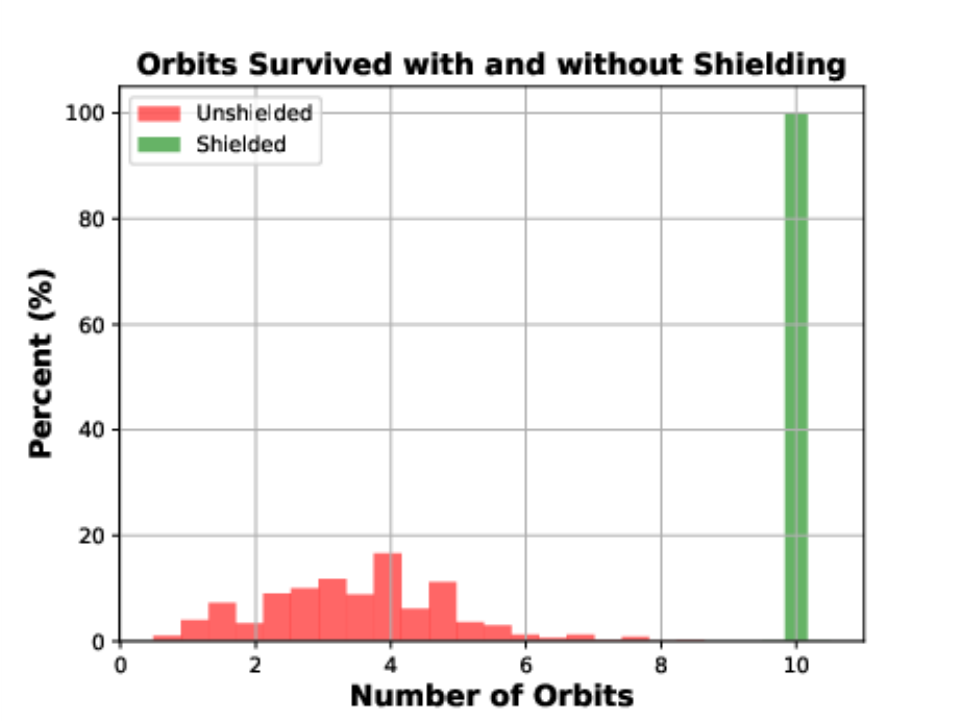}
    \caption{A histogram of Basilisk simulation end times for 1000 LEO trajectories with a shield generated by Algorithm~\ref{alg:action_removal_2} (green) and without a shield (red).}
    \label{fig:basilisk_results}
\end{figure}

We generate a fine discretization of the state space for our IMDP model, consisting of 51450 states and label the unsafe states (power $< 20$\% and normalized reaction wheel speed $> 1.2$) as $b$. The shield generated with $\varphi_3$ identifies that 59.6\% of the IMDP states are guaranteed satisfy $\varphi_3$ and 20.4\% of states cannot be verified
(i.e., there is no action that can guarantee satisfaction of $\varphi_3$). We again note that the majority of the abstraction time is spent on calculating Post, and only 1 hour of the 17 reported in Table \ref{tab:timing} is used to generate the Product IMDP. Despite the exponential time complexity of our algorithm, we identify a shield with the Product IMDP in under 5 minutes.

We deploy 1000 satellites on random LEO trajectories and simulate the evolution with Basilisk \cite{kenneally2020basilisk} under a random policy for 10 orbits, with a maximum action length of one minute. Resulting trajectory lengths are shown in Figure~\ref{fig:basilisk_results}. When deployed without the shield all trajectories result in satellite failure significantly before the 10 orbit limit, typically due to power loss. However when deployed with the shield we find that all trajectories remain safe, as noted in Figure \ref{fig:basilisk_results}.

Finally, we note that, under the much more strict safety specification of $\square(\neg b)$ (``remain safe for all time"), our algorithm identifies no safe actions.
This is likely due to uncertainty induced by the state choice (e.g. normalized wheel speeds do not indicate a direction of the vector) and could be mitigated with a higher-dimensional representation.  This gives rise to an interesting question of \emph{how to learn high-dimensional systems accurately using low dimensional models?} for future studies.

\section{Conclusion}
    \label{conclusion}
    In this work we bridge the gap in shielding literature, enabling shield generation for unknown, continuous state systems such that the system is guaranteed to satisfy a safe LTL specification. We prove that we can identify the maximal set of safe policies in finite time by generating an abstraction of the unknown system through data and that the guarantees hold for the unknown system with supporting empirical evidence. While the shields demonstrate safe deployment for high-dimensional systems such as an Earth observing satellite, more evidence is needed on the interaction of such shields and DRL policies as well as expanding guarantees to specifications of ``remain safe for all time". Similarly, the curse of dimensionality still restricts the usage of our method as abstraction generation becomes a computational burden. We hope to address these questions in future work. 

\bibliographystyle{IEEEtran}
\bibliography{references}

\end{document}